\newtheorem{theorem}{Theorem}[section]
\theoremstyle{definition}
\newtheorem{definition}{Definition}[section]
\newtheorem{example}{Example}[section]
\newtheorem{corollary}{Corollary}[section]
\newtheorem{alg}{Algorithm}[section]
\theoremstyle{remark}
\newtheorem{remark}{Remark}[section]
\begin{document}

\begin{frontmatter}

\title{
Computing Hypergeometric Solutions of 
Second Order Linear Differential Equations 
using Quotients of Formal Solutions
and Integral Bases
}


\author{Erdal Imamoglu}\footnotemark[1]
\address{	Department of Mathematics, 
	Florida State University, 
	Tallahassee, FL 32306, USA.}
\ead{eimamogl@math.fsu.edu}

\author{Mark van Hoeij}\footnote{Supported by NSF grant 1319547.}
\address{	Department of Mathematics, 
	Florida State University, 
	Tallahassee, FL 32306, USA.}
\ead{hoeij@math.fsu.edu}

\begin{abstract}
We present two algorithms for computing hypergeometric solutions of second order linear differential operators with rational function coefficients. Our first algorithm searches for solutions of the form
\begin{equation}
	\label{form_of_solutions}
	\exp(\int r \, dx)\cdot{_{2}F_1}(a_1,a_2;b_1;f)
\end{equation}
where $r,f \in \overline{\mathbb{Q}(x)}$, and $a_1,a_2,b_1 \in \mathbb{Q}$. It uses modular reduction and Hensel lifting. 
Our second algorithm tries to find solutions in the form
\begin{equation}
	\label{form_of_solutions_general_case}
	\exp(\int r \, dx)\cdot 
	\left( 
		r_0 \cdot{_{2}F_1}(a_1,a_2;b_1;f)
		+ 
		r_1 \cdot{_{2}F_1}'(a_1,a_2;b_1;f)
	\right)
\end{equation}where $r_0, r_1 \in \overline{\mathbb{Q}(x)}$, as follows:
It tries to transform the input equation to another equation with solutions of type (\ref{form_of_solutions}), and then uses the first algorithm. 
\end{abstract}

\begin{keyword}
Symbolic Computation\sep 
Linear Differential Equations\sep 
Closed Form Solutions\sep
Hypergeometric Solutions\sep
Integral Bases. 
\end{keyword}

\end{frontmatter}

\section{Introduction}
\label{section_introduction}
A second order homogeneous linear differential equation with rational function coefficients $A_i \in  \mathbb{Q}(x)$
\begin{equation}
	\label{differential_equation}
	A_2 y'' + A_1 y' + A_0 y = 0
\end{equation}
corresponds to the differential operator
\begin{equation*}
	L = A_2 \partial^2 + A_1 \partial + A_0 \in \mathbb{Q}(x)[\partial]
\end{equation*}
where $\partial = \frac{d}{dx}$. Another representation of (\ref{differential_equation}) is 
\begin{equation*}
	L(y) =0.
\end{equation*}

This paper gives two heuristic (see Remarks \ref{address} and \ref{address2}) algorithms to find a hypergeometric solution of (\ref{differential_equation}) in the form (\ref{form_of_solutions}) and (\ref{form_of_solutions_general_case}). The form (\ref{form_of_solutions_general_case}) is more general than in prior works \cite{fang_vhoeij, kunwar_vhoeij, vhoeij_vidunas, kunwar}. Papers \cite{vhoeij_vidunas}  and \cite{kunwar} were restricted to a specific number of singularities (4 in \cite{vhoeij_vidunas}  and 5 in \cite{kunwar}). Papers \cite{kunwar_vhoeij} and \cite{fang_vhoeij} were restricted to specific degrees (degree 3 in \cite{kunwar_vhoeij} and a degree-2 decomposition in \cite{fang_vhoeij}). Our algorithms are not restricted to a specific number of singularities or a specific degree. Moreover, our algorithms can find algebraic functions $f$ in (\ref{form_of_solutions}) and (\ref{form_of_solutions_general_case}).

Our first algorithm, Algorithm \ref{find2f1}, tries to find solutions of (\ref{differential_equation}) in the form of (\ref{form_of_solutions}). Our second algorithm, Algorithm \ref{hypergeometricsols}, tries to reduce equations with solutions of form (\ref{form_of_solutions_general_case}) to equations and then calls our first algorithm. 

We assume that (\ref{differential_equation}) has no Liouvillian solutions and hence irreducible. Otherwise one can solve (\ref{differential_equation}) with Kovacic's algorithm \cite{kovacic}.

Let $L_{inp}$ $\in$ $\mathbb{Q}(x)[\partial]$ ($L$ input) be a second order linear differential operator, regular singular (details in Section \ref{subsection_differential_operators}) and without Liouvillian solutions:
\begin{itemize}
	\item Goal 1: (Algorithm \ref{find2f1}) Find a solution of form~(\ref{form_of_solutions}) if it exists.
	\item  Goal 2: (Algorithm \ref{hypergeometricsols}) Try to transform $L_{inp}$ to a simpler operator (which hopefully has a solution in form (\ref{form_of_solutions})).
\end{itemize}

The crucial steps for Goal 1 are to find (candidates for) $a_1$, $a_2$, $b_1$ and the \emph{pullback function} $f$. Finding the parameters $a_1$, $a_2$, $b_1$ is the combinatorial part; Theorem \ref{rh_for_des_thm} helps us to eliminate the vast majority of cases. Given $a_1,a_2,b_1$ (or equivalently a base operator $L_B$), \emph{if we know the value of a certain constant $c$}, by comparing \emph{quotients of formal solutions} of $L_B$ and $L_{inp}$, we can compute $f$. We have no direct formula for $c$; to obtain it with a finite computation, we take a prime number $\ell$. Then, for each $c \in \{1, \dots , \ell-1 \}$ we try to compute $f$ modulo $\ell$. If this succeeds, then we lift $f$ modulo a power of $\ell$, and try rational number reconstruction.

Goal 2 is to find a transformation to convert $L_{inp}$ to a simpler operator $\tilde{L}_{inp}$. The key idea is to follow the strategy of the \texttt{POLRED} algorithm in \cite{cohen}. It takes as input a polynomial $L_{inp} \in \mathbb{Q}[x]$ and finds an element $\mathcal{G} \in \mathbb{Q}[x] / \mathbb{Q}[x]L_{inp}$ whose minimal polynomial is close to optimal. It works as follows:
\begin{enumerate}
	\item ``Finite points'': Compute an integral basis.
	\item ``Valuations at infinity'': Find an integral element $\mathcal{G}$ with (near) optimal absolute values.
\end{enumerate}
Our key idea is to apply \texttt{POLRED}'s strategy to $L_{inp} \in \mathbb{Q}(x)[\partial]$. First compute an integral basis (introduced in \cite{kauers_koutschan}). Then \emph{normalize at infinity} (following \cite[Section 2.3]{trager} where this is done for function fields) so we can select an element with minimal valuations at infinity.

\begin{example}[Rational Pullback Function]\label{example1}
The differential operator
\begin{equation*}
	L_{inp} = 147x(x-1)(x+1)\partial^2+ (266x^2 - 42x - 98)\partial + 20x-5
\end{equation*}
has a $_{2}F_1$-type solution in the form of (\ref{form_of_solutions}), which is
\begin{equation*}
	Y(x) = 
	\exp{(\int r \, dx)} \cdot 
	{_2F_1} \left( \frac{5}{42}, \frac{11}{42} ; \frac{2}{3} ; f \right)
\end{equation*}
where 
\begin{equation}
	\label{expr}
	\exp({\int r \, dx}) = \left( x+1 \right) ^{-{\frac {5}{21}}}
	\,\,\,\,\,{\rm and }\,\,\,\,\,
	f = {\frac {4x}{ \left( x+1 \right) ^{2}}}
\end{equation}
Section \ref{subsection_cand_exp_diffs} shows how to find the parameters $a_1, a_2, b_1$ $=$ $\frac{5}{42}, \frac{11}{42}, \frac{2}{3}$. Then $f$ is computed with the quotient method:
\end{example}

\begin{remark}[The Quotient Method]\label{qmethod}
The hypergeometric function 
\begin{equation*}
	_2F_1 \left( \frac{5}{42}, \frac{11}{42}; \frac{2}{3} ; x \right)
\end{equation*}
is a solution of the Gauss hypergeometric differential operator
\begin{equation*}
	L_B = 
	{{\partial}}^{2}
	+{\frac { \left( 29\,x-14 \right) }{21x\left( x-1 \right) }} \partial 
	+{\frac {55}{1764\,x \left( x-1 \right) }}.
\end{equation*}
$L_B$ has two solutions at $x=0$. They are
\begin{align*}
	y_1(x) & ={_2F_1}\left( \frac{5}{41}, \frac{11}{42} ; \frac{2}{3}, x \right) = 1+{\frac {55}{1176}}x+ \dots , \\ 
	y_2(x) & = {x^{\frac{1}{3}}} \left(1 + {\frac {475\,{}}{2352}}x+{\frac {1941325\,}{19361664}}x^2 + \dots \right). 
\end{align*}
The so-called {\em exponents} of $L_B$ at $x=0$ are the exponents of $x$ in the dominant terms of the solutions $y_1$ and $y_2$,  so the exponents are $e_{0,1}=0$ and $e_{0,2}=\frac{1}{3}$. The minimal operator for $y(f)$ has the following solutions at $x=0$:
\begin{align*}
	y_1(f) & =  1+{\frac {55}{294}}x-{\frac {4939}{86436}}{x}^{2}+{\frac {16135823}{304946208}}{x}^{3}+  \dots \,,\\
	y_2(f) & =  c \cdot {x^{\frac{1}{3}}} \left( 1+{\frac {83{}}{588}}x+{\frac {6805}{1210104}}x^2+ \dots \right)
\end{align*}
for some constant $c$ that depends on $f$. The exponents are again $0$ and $\frac{1}{3}$, because $x=0$ is a root of $f$ with multiplicity $e_0=1$ (Theorem \ref{theorem_expdiffs_and_multiplicity}). Let
\begin{align}
	\label{solinp1}
	Y_1(x) & = \exp( \int r \,dx ) y_1(f) = 1-{\frac {5}{98}}x+{\frac {439}{9604}}{x}^{2}+ \dots \, ,\\
	\label{solinp2}
	Y_2(x) & = \exp( \int r \,dx )y_2(f) = c \cdot x^{\frac{1}{3}} \left( 1 - {\frac {19}{196}}x +\dots \right).
\end{align}
(\ref{solinp1}) and (\ref{solinp2}) form a basis of solutions of $L_{inp}$. Here $\exp( \int r \,dx )$ is the same as in (\ref{expr}). Denote the quotients of the formal solutions of $L_B$ and $L_{inp}$ by
\begin{align*}
	q & = \frac{y_1(x)}{y_2(x)} \\ 
	Q & = \frac{Y_1(x)}{Y_2(x)}  = \frac{y_1(f)}{y_2(f)}=q(f)
\end{align*}
respectively. It follows that $q^{-1}(Q(x))$ gives a series expansion of $f$ at $x=0$. Given enough terms we can compute $f$ with rational function reconstruction. This {\em Quotient Method} was already used in \cite[Section 5.1]{vhoeij_vidunas}. In order to turn this into an algorithm for solving differential equations we need to answer the following questions:
\begin{enumerate}[label=Q\arabic*.]
	\item How many terms are needed to reconstruct $f$? 
	This is equivalent to finding a degree bound for $f$.
	\item How to find the parameters $a_1$, $a_2$, $b_1$? 
	This is the combinatorial part of our algorithm.
	\item The exponents $0$, $\frac{1}{3}$ of $L_{inp}$ at $x=0$ only determine $\frac{Y_1}{Y_2}$ up to a constant factor (see Remark \ref{non-rem-sing} in Section \ref{subsection_transformations_and_singularities}). This means the quotient $\frac{y_1(f)}{y_2(f)}$ is only known up to a constant $c$. How to find this constant?	
	\item What if $L_{inp}$ has logarithmic solutions at $x=0$?
	\item What if $f$ is an algebraic function?
	\item What if $L_{inp}$ does not have solutions in the form of (\ref{form_of_solutions}), but has solutions in the form of (\ref{form_of_solutions_general_case})?
\end{enumerate}

Remark \ref{ansQ1}, Section \ref{subsection_cand_exp_diffs}, Section \ref{subsection_recovering_f_and_r}, and Section \ref{subsection_logcase_quotient_method} provide answers to Q1, Q2, Q3, and Q4 respectively. Section \ref{subsection_recovering_f_and_r} answers Q5.  The parts Q1,...,Q5 are already in our ISSAC 2015 paper \cite{imamoglu_vhoeij}. Algorithm \ref{hypergeometricsols}, which finds solutions of form (\ref{form_of_solutions_general_case}), is new compared to \cite{imamoglu_vhoeij}. So Q6 is the main new part in this paper. It will be discussed in Section \ref{section_integral_basis}. Example \ref{Q6example} will illustrate to Q6.
\end{remark}

\begin{remark}\label{address}
	Both algorithms are very effective in practice but they are not proven. 
	For completeness for Goal 1 we still need a 
	theorem for \emph{good prime numbers}. A
	\emph{good prime} is a prime for which reconstruction will work.
\end{remark}

\begin{example}[Algebraic Pullback Function]\label{example2}
The differential operator
\begin{equation*}
	L_{inp} = 
	{{\partial}}^{2}
	+\frac{1}{4}\,{\frac {{x}^{4}-44\,{x}^{3}+1206\,{x}^{2}-44\,x+1}{ \left( {x}^{2}-34\,x+1 \right)^{2}{x}^{2}}}
\end{equation*}
has a $_2F_1$-type solution in the form of (\ref{form_of_solutions}), which is
\begin{equation*}
	Y(x) = \exp( -\frac{1}{2} \int r \,dx )\, \cdot \, _2F_1 \left( \frac{1}{3}, \frac{2}{3} ; 1 ; f \right)
\end{equation*}
where $r =$
\begin{equation*}
	{\frac {-{x}^{5}+22\,{x}^{4}-55\,{x}^{3}-343\,{x}^{2}+ 58\,x-1+6\,x\left( {x}^{2}-7\,x+1 \right)\sqrt {{x}^{2}-34\,x+1 }}{x \left({x}^{4}-41\,{x}^{3}+240\,{x}^{2}-41\,x+1 \right) \left( x+1 \right) }}
\end{equation*}
and
\begin{equation*}
	f =\frac{1}{2}\,{\frac {1+30\,x-24\,{x}^{2}+{x}^{3}-\left( {x}^{2}-7\,x+1 \right) \sqrt {{x}^{2}-34\,x+1 }}{1+3\,x+3\,{x}^{2}+{x}^{3}}}.
\end{equation*}
Here the pullback function $f$ is an algebraic function: $\mathbb{Q}(x,f)$ is an algebraic extension of $\mathbb{Q}(x)$ of degree $a_f = 2$ ($a_f$ is $1$ if and only if $f$ is a rational function, as in Example \ref{example1}).
Algorithm \ref{find2f1} can find this solution.
\end{example}

\begin{example}[Finding Solutions in the form of (\ref{form_of_solutions_general_case}) using an Integral Basis]\label{Q6example}
Consider the differential operator\footnote{Prof. Jean-Marie Maillard sent us this differential operator.} 
\begin{align*}
L_{inp} = {{\partial}}^{2}
&-{\frac { 512\,{x}^{5}+384\,{x}^{4}-64\,{x}^{3}-88\,{x}^{2}-10\,x-1  }{x \left( 4\,x-1 \right)  \left( 4\,x+1 \right)  \left( 16\,{x}^{3}+24\,{x}^{2}+5\,x+1 \right) }}{\partial}\\
&+{\frac {512\,{x}^{5}+64\,{x}^{4}-128\,{x}^{3}-60\,{x}^{2}-8\,x-1}{{x}^{2} \left( 4\,x-1 \right)  \left( 4\,x+1 \right)  \left( 16\,{x}^{3}+24\,{x}^{2}+5\,x+1 \right) }}.
\end{align*} 

Algorithm \ref{find2f1} can not solve $L_{inp}$. We try to transform $L_{inp}$ to simpler operator $\tilde{L}_{inp}$.
First we compute an integral basis. Then we normalize the basis at infinity and obtain $[B_0, B_1]$ where
\begin{align*}
	B_0 = 
	&{\frac { 16\,{x}^{4}-{x}^{2} }{ \left( 16\,{x}^{3}+24\,{x}^{2}+5\,x+1 \right) x}}{\partial}\\
	&+{\frac {-34359738400\,{x}^{3}-51539607556\,{x}^{2}-10737418241\,x-2147483648}{ \left( 16\,{x}^{3}+24\,{x}^{2}+5\,x+1 \right) x}}
\end{align*}
and
\begin{align*}
	B_1 = {\frac {  16\,{x}^{3}-x }{ \left( 16\,{x}^{3}+24\,{x}^{2}+5\,x+1 \right) x}}{\partial}+{\frac {-32\,{x}^{2}-4\,x-1}{ \left( 16\,{x}^{3}+24\,{x}^{2}+5\,x+1 \right) x}}.
\end{align*}
We try to find a suitable $\mathcal{G} \in \mathbb{Q}(x)[\partial] / \mathbb{Q}(x)[\partial]L_{inp}$. It should be a combination of $B_0$ and $B_1$. For this example, we take $\mathcal{G} = B_1$. This $\mathcal{G}$ is called a \emph{gauge} transformation. It maps solutions of $L_{inp}$ to solutions of
\begin{align*}
	\tilde{L}_{inp} = {{\partial}}^{2}
	&+\frac{48\,x^2 - 1}{x\, (16\,x^2-1)}\partial+ \frac{16}{16\,{x}^{2}-1}.
\end{align*}
$\tilde{L}_{inp}$ has a solution in the form of (\ref{form_of_solutions}),
\begin{equation*}
y(x) = {_2F_1}\left( \frac{1}{2}, \frac{1}{2} ; 1 ; 16x^2 \right)
\end{equation*}
which is easy to find with Algorithm \ref{find2f1}. Then we apply the inverse \emph{gauge} transformation and obtain a solution of $L_{inp}$ in the form of (\ref{form_of_solutions_general_case}), which is
\begin{align*}
	Y(x) = \left( {4{x}^{3}+x^2+\frac{x}{2}} \right) {_2F_1}\left( \frac{1}{2}, \frac{1}{2} ; 1 ; 16x^2 \right)+ \left( 32{x}^{5}-2{x}^{3}  \right) {_2F_1}\left( \frac{3}{2}, \frac{3}{2} ; 2 ; 16x^2 \right).
\end{align*}
	
\end{example}

\section{Preliminaries}
\label{section_prelimineries}
This section recalls the concepts needed in later sections.

\subsection{Differential Operators, Singularities, Formal Solutions}
\label{subsection_differential_operators}
We start with some classical definitions which can be also found in \cite{ince, vanderput_singer, kunwar, fang, debeerst, yuan}.
\begin{definition}
Let $L = \sum_{i=0}^{n} A_i \partial^i \in  \mathbb{C}(x)[\partial]$ be an operator of order $n$.
\begin{enumerate}[label=(\roman*)]
	\item A point $p \in \mathbb{C}$ is called a \emph{singularity} of $L$ if it is a zero of the leading coefficient of $L$ or a pole of any other coefficient of $L$. The point $p={\infty}$ is called a singularity if $p=0$ is a singularity of $L_{1/x}$. Here $L_{1/x}$ is the differential operator obtained from $L$ via a change of variables $x \mapsto \frac{1}{x}$ (note that $x \mapsto f$ sends $\partial$ to $\frac1{f'} \partial)$. 
	\item If $x=p$ is not a singularity, then it is called a \emph{regular point} of $L$.
	\item A singularity $p \in \mathbb{C}$ is called a \emph{regular singularity} if $(x-p)^i \, \frac{A_{n-i}}{A_n}$ is analytic at $x=p$ for $1 \leq i \leq n-1$. The point $p={\infty}$ is  a regular singularity if $p=0$ is a regular singularity of $L_{1/x}$. 
	\item $L$ is \emph{regular singular} if all its singularities are regular singular.
\end{enumerate} 
\end{definition}

\subsection{Gauss Hypergeometric Differential Operator and $_2F_1$ Function}\label{subsection_gauss_hypergeometric_function}
Definitions also can be found in \cite{wang_guo, kunwar, fang}. Let $a_1$, $a_2$, $b_1$ $\in$ $\mathbb{Q}$. The operator
\begin{equation*}
	L_B = x(1-x)\partial^2 + (b_1-(a_1+a_2+1)x)\partial - a_1a_2
\end{equation*}
is called \emph{Gauss hypergeometric differential operator} (GHDO). The solution space of $L_B$ in a \emph{universal extension} \cite{fang} has dimension 2 because the order of $L_B$ is 2. One of the solutions of $L_B$ at $x=0$ is the \emph{Gauss hypergeometric function}. It is denoted by $_{2}F_1$ and defined by the Gauss hypergeometric series
\begin{equation*}
	_{2}F_1 (a_1,a_2;b_1;x) 
	= 
	\sum _{k=0}^{\infty} \frac{(a_1)_k(a_2)_k}{(b_1)_k k! }x^k.
\end{equation*}
Here $(\lambda)_k$ denotes the \emph{Pochammer symbol}. It is defined as $(\lambda)_k=\lambda(\lambda+1)\dots(\lambda+k-1)$ and $(\lambda)_0 = 1$. 

$L_B$ has three regular singularities at the points $x=0$, $x=1$, and $x=\infty$, with exponents $\{0,1-b_1\}$, $\{0, b_1-a_1-a_2\}$, and $\{a_1,a_2\}$ respectively. We denote the exponent differences of a GHDO as $\alpha_0 = |1-b_1|$, $\alpha_1 = |b_1-a_1-a_2|$, $\alpha_{\infty} = |a_1-a_2|$. We may assume $\{ a_1, a_2, b_1-a_1, b_1 - a_2 \} \cap \mathbb{Z} = \emptyset$, otherwise $L_B$ is reducible (it has exponential solutions).

Let $d_i$ be $\infty$ if $\alpha_i \in \mathbb{Z}$, and the denominator of $\alpha_i$ if $\alpha_i \in \mathbb{Q} - \mathbb{Z}$. We will only consider $a_1,a_2,b_1$ for which $L_B$ has no Liouvillian solutions. From the Schwarz list \cite{schwarz} one finds that this is equivalent to $\frac{1}{d_0} + \frac{1}{d_1} + \frac{1}{d_{\infty}} < 1$.

\subsection{Transformations and Singularities}\label{subsection_transformations_and_singularities}
We summarize properties of transformations in this section. These properties can also be found in \cite{kunwar, fang, debeerst, yuan}.

Let $L_1 \in \mathbb{C}(x)[\partial]$ 
be a differential operator of order $2$, and 
let $y$ be a solution of $L_1$.
We consider the following {\em transformations} 
that send solutions of $L_1$ to solutions of another second order differential operator $L_2$.
\begin{enumerate}
	\item \emph{Change of variables}: 
	\\$y(x) \longrightarrow y(f)$, 
	where {$f\in \overline{\mathbb{Q}(x)}$}.  
	\\For $L_1$ this means substituting $(x, \partial) \mapsto (f, \frac1{f'} \partial)$.
	\\Notation: 
	$L_1 \hspace{1mm}{\xrightarrow{f}}_C
	\hspace{1mm} L_2$.
	
	\item \emph{Exp-product}: 
	\\$y(x) \longrightarrow \exp{(\int{r \,dx})}y(x)$, 
	where {$r\in \overline{\mathbb{Q}(x)}$}. 
	\\For $L_1$ this means $\partial \mapsto \partial - r$.
	\\Notation: 
	$L_1 \hspace{1mm}{\xrightarrow{r}}_E
	\hspace{1mm} L_2$.
	
	\item \emph{Gauge transformation}:
	\\$y(x) \longrightarrow r_0 \cdot y(x) + r_1 \cdot y(x)$, 
	where {$r_0, r_1 \in \mathbb{Q}(x)$}. 
	\\For $L_1$ this means computing 
	the least common left multiple of $L_1$ and $r_1\partial + r_0$, and 
	right-dividing it by $r_1\partial + r_0$.
	\\Notation: 
	$L_1 \hspace{1mm}{\xrightarrow{r_0, r_1}}_G
	\hspace{1mm} L_2$.
\end{enumerate}

\begin{remark}
\label{defandrem}
Transformations can affect singularities and exponents.
\begin{enumerate}[label=(\roman*)]
	\item If a transformation ${\xrightarrow{r}}_E$ can send a singular point $x=p$ to a regular point $x=p$, then we call $x=p$ a {\em false singularity}. 
	\item A singularity $x=p$ is a false singularity \cite{debeerst} if and only if $x=p$ is not logarithmic and the exponent difference is 1.
	\item If $x=p$ is a singularity of $L_1$ and if transformation ${\xrightarrow{r}}_E$ can send $L_1$ to an equation $L_2$ for which all solutions of $L_2$ are analytic at $x=p$, then we call $x=p$ a {\em removable singularity}. 
	\item A point $x=p$ is removable  \cite{debeerst} if and only if $x=p$ is not logarithmic and the exponent difference is an integer. Non-removable singularities are called {\em true singularities}.
	\item A point $x=p$ is a true singularity if and only if the exponent difference is not an integer {\em or} $x=p$ is logarithmic.
	\item  If $L_1 \hspace{1mm}{\xrightarrow{r_0, r_1}}_G \hspace{1mm} L_2$, then $L_1$ and $L_2$ are called \emph{gauge equivalent}. If $V_1$ and $V_2$ are the solution spaces of $L_1$ and $L_2$ respectively, then $\mathcal{G}=r_1 \partial + r_0$ maps $V_1$ to $V_2$, i.e., $\mathcal{G}(V_1)=V_2$.
\end{enumerate}
\end{remark}

\begin{remark}\label{non-rem-sing}
The quotient method (Remark \ref{qmethod} in Section \ref{section_introduction}) can only use true singularities, otherwise, $\frac{Y_1}{Y_2}$, the quotients of solutions of $L_{inp}$, would only be known up to a M\"obius transformation instead of up to a constant.
\end{remark}

\begin{remark}
At the moment, Algorithms \ref{find2f1} and \ref{hypergeometricsols} are only implemented for rational function coefficients. However, if $f,r,r_0,r_1$ are algebraic, then the three transformations may turn an operator with rational function coefficients into an operator with algebraic function coefficients. 
\end{remark}

\begin{theorem}\label{theorem_expdiffs_and_multiplicity}\cite{bostan_chyzak_vhoeij_pech}
Let the GHDO $L_B$ have exponent differences $\alpha_0$ at $x=0$, $\alpha_1$ at $x=1$, and $\alpha_{\infty}$ at $x=\infty$. Let $L_B \hspace{1mm} {\xrightarrow{f}}_C \hspace{1mm} L_{inp}$. If $f(p) \in \{0,1,\infty\}$, then $L_{inp}$ has the following exponent difference at $x=p$:
\begin{itemize}
	\item $\alpha_0 \cdot  e_p$ if $f$ has a zero at $x=p$ with multiplicity $e_p$,
	\item $\alpha_1  \cdot e_p$ if $f-1$ has a zero at $x=p$ with multiplicity $e_p$,
	\item $\alpha_{\infty}  \cdot e_p$ if $f$ has a pole at $x=p$ with order $e_p$.
\end{itemize}
\end{theorem}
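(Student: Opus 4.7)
The plan is purely local: the exponent difference of $L_{inp}$ at $x=p$ is determined by two linearly independent formal solutions of $L_{inp}$ near $p$, and such a pair is obtained by substituting $x \mapsto f(x)$ into a Frobenius basis of $L_B$ at the singularity $s = f(p) \in \{0,1,\infty\}$. I would begin by recalling that at each of its three singularities $L_B$ admits a Frobenius basis
\[
y_i(x) = (x-s)^{\rho_i}\, h_i(x), \qquad i = 1,2,
\]
(with $1/x$ in place of $x-s$ when $s=\infty$, and possibly an additional logarithmic term in the resonant case), where each $h_i$ is analytic and nonzero at $s$ and the local exponents satisfy $|\rho_1 - \rho_2| = \alpha_s$.

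For the case $f(p) = 0$ with multiplicity $e_p$, write $f(x) = (x-p)^{e_p}\, u(x)$ with $u$ analytic at $p$ and $u(p) \neq 0$. Substituting into the Frobenius basis at $x=0$ yields
\[
y_i(f(x)) = (x-p)^{e_p \rho_i} \cdot u(x)^{\rho_i}\, h_i(f(x)),
\]
and the factor $u^{\rho_i}\, h_i \circ f$ is analytic and nonzero at $p$ because $u(p) \neq 0$ and $h_i(0) \neq 0$. Hence $L_{inp}$ has local exponents $e_p \rho_1,\, e_p \rho_2$ at $p$, so its exponent difference is $e_p |\rho_1 - \rho_2| = e_p \alpha_0$. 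The case $f(p) = 1$ with $f-1$ vanishing to order $e_p$ is identical after expanding around $s=1$. For $f(p) = \infty$ with $f$ having a pole of order $e_p$, I would pass to the local parameter at infinity: $1/f = (x-p)^{e_p}\, v(x)$ with $v(p) \neq 0$, and the same argument applied to the Frobenius basis of $L_B$ at $\infty$ (in the variable $1/x$) gives exponent difference $e_p \alpha_\infty$.

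The only real subtlety is the resonant (logarithmic) case, where one basis element of $L_B$ at $s$ carries an extra $\log(x-s)$ term. Under $x \mapsto f$ this becomes $\log f(x) = e_p \log(x-p) + \log u(x)$, which remains logarithmic with nonzero coefficient $e_p$ on $\log(x-p)$; the substituted pair therefore still displays local exponents $e_p \rho_1,\, e_p \rho_2$, and the formula is unchanged. The absolute value in the statement is harmless because exponent differences are only defined up to sign. I do not expect any serious obstacle here: the whole argument is careful bookkeeping of Frobenius expansions under substitution, and the key fact is simply that $(x-p)^{e_p}$ and $(x-p)^{-e_p}$ are the leading factors of $f$ and $1/f$ at $p$.
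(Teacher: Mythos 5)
Your argument is correct and is the standard one: since the paper only cites this result from \cite{bostan_chyzak_vhoeij_pech} and gives no proof of its own, there is nothing to diverge from, and your substitution of $f$ into a Frobenius basis at $f(p)$, reading off the exponents $e_p\rho_1, e_p\rho_2$ (with the logarithmic term persisting in the resonant case), is exactly the expected bookkeeping. The only cosmetic caveat is that $f$ may be algebraic in this paper's setting, in which case the factorization $f = (x-p)^{e_p}u(x)$ should be read as a Puiseux expansion with $e_p \in \mathbb{Q}$; the argument goes through verbatim.
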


\section{Computing Solutions of a Second Order Linear Differential Operator in the form of (\ref{form_of_solutions}) by using Quotients of Formal Solutions}
\label{section_algorithm}

This section gives our first algorithm, which looks for solutions of the form of  (\ref{form_of_solutions}).

\subsection{Problem Statement}\label{subsection_problem_statement}
Given a second order linear differential operator $L_{inp} \in \mathbb{Q}(x)[\partial]$, irreducible and regular singular, we want to find a $_2F_1$-type solution of the differential equation $L_{inp}(y)=0$ of the form of (\ref{form_of_solutions}). This is equivalent to finding transformations 1 and 2 from a GHDO $L_B$ to $L_{inp}$. Therefore, we need to find
\begin{enumerate}
	\item $L_{B}$ (i.e., find $a_1,a_2,b_1$),
	\item parameters $f$ and $r$ of the change of variables and exp-product transformations such that 
	$L_{B} \hspace{1mm} {\xrightarrow{f}}_C  \hspace{1mm} {\xrightarrow{r}}_E \hspace{1mm} L_{inp}$.
\end{enumerate}

\begin{alg}\label{find2f1}
General Outline of \texttt{find\_2f1}.

\begin{enumerate}

\item[] INPUT: $L_{inp} \in \mathbb{Q}(x)[\partial]$ and (optional) $a_fmax$ where
	\begin{enumerate}
	\item[] $L_{inp} =$ a second order regular singular irreducible operator,
	\item[] $a_fmax =$ bound for the algebraic degree $a_f$ (See Example \ref{example2}). If omitted, then $a_fmax=2$ which means our implementation tries $a_f=1$ and $a_f=2$.
	\end{enumerate}	
	
\item[] OUTPUT: 
Solutions of $L_{inp}$ in the form of (\ref{form_of_solutions}), or  an empty list.
\item[]
\item[] For each $a_f \in \{1, \ldots, a_fmax\}$:
\item 
Use Section \ref{subsection_cand_exp_diffs} 
to compute candidates for $L_B$ and $d_f$. 
This is the combinatorial part of the algorithm. 

\item 
For a candidate $(L_B, d_f)$, compute formal solutions of 
$L_B$ and $L_{inp}$ at a 
non-removable singularity (see Remark \ref{non-rem-sing} in Section 
\ref{subsection_transformations_and_singularities}) 
up to precision $a \geq 2(a_f+1)(d_f+1)+6$. 
Take the quotients of formal solutions and compute series 
expansions for $q^{-1}$ and $Q$ 
which will be used
to compute 
\begin{equation}\label{eq_c}
f = q^{-1}(c \, Q(x)) 
\end{equation}
in the next step.

\item 
Choose a good prime number $\ell$ 
and try to find $c$ mod $\ell$ by looping $c=1,2,\ldots,\ell-1$ as in Section 
\ref{subsection_quotient_method}.  For each $c$:

\begin{enumerate}

\item 
Compute $f$ mod $(x^a,\ell)$ from equation (\ref{eq_c}) and use it to reconstruct $f$ mod $\ell$ (the image of $f$ in $\mathbb{F}_\ell{}(x)$). 
If it fails for every $c$, then proceed with the next candidate GHDO 
(if any) in 
Step 2.
If no candidates remain, then
return an empty list.

\item 
If rational reconstruction in Step 3.1 succeeds for some $c$ values, then apply 
Hensel lifting (Section \ref{subsection_recovering_f_and_r}) 
to find $f$ mod a power of $\ell$. 
Then try rational number reconstruction. 
If it does not fail for at least one $c$ value, then we have $f$.
If no solution is found (see Remark \ref{address2} in Section~\ref{subsection_Lifting for a Rational Pullback Function}), then proceed with the next candidate GHDO 
(if any) in 
Step 2.
If no candidates remain, then
return an empty list.

\item 
Use Section \ref{subsection_transformations_and_singularities} to
compute the parameter $r$ of the exp-product transformation.

\end{enumerate}

\item 
Return a basis of $_2F_1$-type solutions of $L_{inp}$.
\end{enumerate}
\end{alg}

Step 2 is explained in Sections 
\ref{subsection_deg_bounds_for_pullback_functions} 
and \ref{subsection_cand_exp_diffs}. 
Step 3 is the quotient method, see Section 
\ref{subsection_quotient_method} for more. 
Steps 3.2 and 3.3 are explained in Sections 
\ref{subsection_recovering_f_and_r} and  \ref{r} respectively.
A Maple implementation of Algorithm \ref{find2f1} and some examples can be found at \cite{imamoglu1}.

\subsection{Degree Bounds for Pullback Functions}
\label{subsection_deg_bounds_for_pullback_functions}

\begin{theorem}[Riemann-Hurwitz Formula]\label{riemann-hurwitz-theorem}
Let $X$ and $Y$ be two algebraic curves with genera $g_X$ and $g_Y$ respectively. If $f : X \longrightarrow Y $ is a non-constant morphism, then
\begin{equation}
	\label{riemann-hurwitz-formula}
	2g_X - 2 = \deg(f) (2 g_Y - 2 ) + \sum_{p \in X} (e_p - 1).
\end{equation}
Here $e_p$ denotes the ramification order at $p \in X$.
See \cite{hartshorne} for more details.
\end{theorem}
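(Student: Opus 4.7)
The plan is to prove the formula via the relative cotangent sheaf and the canonical divisor on a smooth projective curve. After reducing (as we may) to the case that $f\colon X \to Y$ is a finite separable morphism of smooth projective curves, the central input is the short exact sequence of coherent sheaves
\[
0 \longrightarrow f^{*}\Omega_{Y} \longrightarrow \Omega_{X} \longrightarrow \Omega_{X/Y} \longrightarrow 0,
\]
whose exactness on the left uses separability of $f$. Taking degrees yields
\[
\deg \Omega_{X} \;=\; \deg f^{*}\Omega_{Y} + \deg \Omega_{X/Y}.
\]

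I would then evaluate each term. By the standard identification of the canonical class on a smooth projective curve, $\deg \Omega_{X} = 2g_{X} - 2$ and $\deg \Omega_{Y} = 2g_{Y} - 2$; functoriality of the degree under pullback gives $\deg f^{*}\Omega_{Y} = \deg(f)\cdot(2g_{Y} - 2)$. The sheaf $\Omega_{X/Y}$ is a torsion sheaf supported on the ramification locus. Working in a local parameter $t$ at a point $p \in X$ with image $q \in Y$ and ramification index $e_{p}$, a uniformizer of $Y$ at $q$ pulls back to a unit times $t^{e_{p}}$, and the identity $d(t^{e_{p}}) = e_{p}\, t^{e_{p}-1}\, dt$ shows that the stalk of $\Omega_{X/Y}$ at $p$ is cyclic of length exactly $e_{p}-1$. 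Summing over all points of $X$ yields $\deg \Omega_{X/Y} = \sum_{p}(e_{p}-1)$; substituting into the degree identity gives the Riemann--Hurwitz formula.

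The main subtlety is the local length computation for $\Omega_{X/Y}$ at a ramified point: the value $e_{p}-1$ is only correct under tame ramification, which holds automatically in characteristic zero. Since the paper works over fields of characteristic zero, this technicality is harmless here, but in positive characteristic an additional wild contribution would appear. An alternative route available over $\mathbb{C}$ is purely topological and perhaps more transparent: triangulate $Y$ so that every branch value is a vertex, pull back the triangulation to $X$, observe that vertex, edge and face counts all multiply by $\deg(f)$ except that each ramified point $p$ contributes $e_{p}-1$ fewer vertices than a generic fibre point, and conclude using $\chi = 2-2g$. Either route gives the stated identity.
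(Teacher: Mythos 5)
Your proposal is correct and is essentially the proof in the source the paper cites: the paper states the Riemann--Hurwitz formula without proof, deferring to Hartshorne, whose argument is exactly your computation $\deg \Omega_{X} = \deg f^{*}\Omega_{Y} + \deg \Omega_{X/Y}$ from the exact sequence $0 \to f^{*}\Omega_{Y} \to \Omega_{X} \to \Omega_{X/Y} \to 0$ together with the local length calculation at ramified points. Your remark that the local length $e_{p}-1$ requires tame ramification (automatic in characteristic zero, which is the paper's setting) correctly addresses the only subtlety in writing the correction term as $\sum_{p}(e_{p}-1)$ rather than as the degree of the ramification divisor, and the topological Euler-characteristic argument you sketch is an equally valid alternative over $\mathbb{C}$.
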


Let $L_B \in \mathbb{Q}(x)[\partial]$ be a GHDO, $d_f := \deg(f)$, and assume that
\begin{equation*}
	L_B \hspace{1mm} 
	{\xrightarrow{f: \mathbb{P}^1 \longmapsto \mathbb{P}^1 }}_C  \hspace{1mm} 
	{\xrightarrow{r}}_E \hspace{1mm} 
	L_{inp}.
\end{equation*}  
Section 3.1 of \cite{imamoglu_vhoeij} gives 
an a priori bound for $d_f$,
\begin{equation}
\label{general_degree_bound}
	d_f \leq
	\begin{cases}
	6  (n_{true} - 2), & \text{logarithmic case,}\\
	36  \left(n_{true} - \frac{7}{3}\right), & \text{non-logarithmic case.}\\
	\end{cases}
\end{equation}
where $n_{true}$ is the number of true singularities of $L_{inp}$. Algorithm \ref{find2f1} uses this only as an initial degree bound.

\subsection{Riemann-Hurwitz Type Formula For Differential Equations}
\label{subsection_r_h_formula_for_des}

\begin{remark}\label{diffeqs_on_algcurves} 
Let $X$ be any algebraic curve and $\mathbb{C}(X)$ be its function field. The ring $D_{\mathbb{C}(X)} := \mathbb{C}(X)[\partial_t]$ is the ring of differential operators on $X$. Here $t \in \mathbb{C}(X) \setminus \mathbb{C}$. An element $L \in D_{\mathbb{C}(X)}$ is a differential operator defined on the algebraic curve $X$.
\end{remark}

\begin{theorem} \cite[Lemma~1.5]{Baldassari}
\label{rh_for_des_thm}
Let $X$, $Y$ be two algebraic curves with genera $g_X$, $g_Y$, and function fields $\mathbb{C}(X)$, $\mathbb{C}(Y)$ respectively. Let $f : X \longrightarrow Y$ be a non-constant morphism. The morphism $f$ corresponds to a homomorphism $\mathbb{C}(Y) \longrightarrow \mathbb{C}(X)$, which in turn corresponds to a homomorphism $D_{\mathbb{C}(Y)} \longrightarrow D_{\mathbb{C}(X)}$. If $L_1 \in D_{\mathbb{C}(Y)}$ with ${\rm ord}(L_1)=2$ and $L_2$ is the corresponding element in $D_{\mathbb{C}(X)}$, then
\begin{equation}
	\label{riemann-hurwitz-type-formula-thm}
		{\rm Covol}(L_2, X)  =   \deg(f) \cdot {\rm Covol}(L_1, Y)
\end{equation}
where 
\begin{equation*}
		{\rm Covol}(L, X) := 2 g_X - 2 + \sum_{p \in X} (1 -  \Delta(L,p))
\end{equation*}
and where $\Delta(L,p)$ is the absolute value of the exponent difference of $L$ at $p$.
\end{theorem}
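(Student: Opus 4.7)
The plan is to combine the classical Riemann-Hurwitz formula (Theorem \ref{riemann-hurwitz-theorem}) with a local analysis of how exponent differences transform under pullback. The key local fact, which generalizes Theorem \ref{theorem_expdiffs_and_multiplicity} from the GHDO case to arbitrary second-order $L_1$, is that for every $p \in X$ with $q = f(p)$ and ramification index $e_p$,
\[
\Delta(L_2,p) \;=\; e_p\cdot \Delta(L_1,q).
\]
To prove this, choose local parameters $t$ at $q$ and $u$ at $p$, so $t - t(q) = u^{e_p} v$ for some unit $v$. Each Frobenius-type local solution $y \sim t^{\alpha}(1+O(t))$ of $L_1$ at $q$ pulls back to $y(f) \sim u^{e_p\alpha}(1+O(u))$; in the logarithmic case the identity $\log(t-t(q)) = e_p\log(u) + \log(v)$ does not alter the leading exponent. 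Hence both exponents of $L_2$ at $p$ are $e_p$ times the corresponding exponents of $L_1$ at $q$, and so is their absolute difference.

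The rest is a direct computation. Inserting the local formula into the definition of the covolume and then applying Riemann-Hurwitz yields
\begin{align*}
\mathrm{Covol}(L_2,X)
&= (2g_X-2) + \sum_{p\in X}\bigl(1-e_p\,\Delta(L_1,f(p))\bigr)\\
&= \deg(f)(2g_Y-2) + \sum_{p\in X}(e_p-1) + \sum_{p\in X}\bigl(1-e_p\,\Delta(L_1,f(p))\bigr)\\
&= \deg(f)(2g_Y-2) + \sum_{p\in X}e_p\bigl(1-\Delta(L_1,f(p))\bigr).
\end{align*}
Grouping the last sum by fibers of $f$ and using $\sum_{p\in f^{-1}(q)} e_p = \deg(f)$ turns it into $\deg(f)\sum_{q\in Y}(1-\Delta(L_1,q))$, whence $\mathrm{Covol}(L_2,X)=\deg(f)\cdot\mathrm{Covol}(L_1,Y)$.

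The main technical point, and the one requiring the most care, is the local statement above, together with the implicit claim that the covolume sums are finite. Finiteness follows because a point $p\in X$ contributes nontrivially to $\mathrm{Covol}(L_2,X)$ only if $f(p)$ is a singularity of $L_1$ or $f$ is ramified at $p$: elsewhere the exponents of $L_1$ at $f(p)$ are $\{0,1\}$ and $e_p=1$, so $\Delta(L_2,p)=1$. The \emph{extra} contributions that arise at ramified $p$ over regular $q$ are precisely what the Riemann-Hurwitz correction $\sum_p(e_p-1)$ absorbs, which is the bookkeeping that makes the telescoping above work cleanly.
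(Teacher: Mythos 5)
Your proposal is correct and takes essentially the same route as the paper: both proofs combine the classical Riemann--Hurwitz formula with the local scaling rule $\Delta(L_2,p)=e_p\,\Delta(L_1,f(p))$ and the identity $\sum_{p\in f^{-1}(q)}e_p=\deg(f)$. The only differences are presentational — the paper organizes the bookkeeping by counting the finite set $T=f^{-1}(S)$ while you substitute the local rule directly and regroup by fibers, and you make explicit (with a sketch) the local exponent-scaling lemma and the finiteness of the sums, which the paper uses implicitly.
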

\begin{proof}
Following \cite{Baldassari},
take finite sets $S \subseteq Y$ and $T = f^{-1}(S)$ in such a way that all singularities of $L_1$ are in $S$, all singularities of $L_2$ are in $T$, and all branching points in $X$ are in $T$ as well.
\begin{align}
		\#T  =  \sum_{p \in T} 1 & =  \sum_{p \in T}e_p + \sum_{p \in T} (1 - e_p) \\ 
	\label{numelemsT-1}
		& =  \deg(f) \cdot \#S + \sum_{p \in X} ( 1 - e_p ) \\ 
	\label{numelemsT-2}
	& = \deg(f) \cdot \#S - \left( 2g_X - 2 - \deg(f) ( 2g_Y - 2 ) \right).
\end{align}
From (\ref{numelemsT-1}) to (\ref{numelemsT-2}) we used (\ref{riemann-hurwitz-formula}). Then,
\begin{align}
	\sum_{p \in X} \left( 1 - \Delta(L_2,p)  \right) 
		& = \sum_{p \in T} \left( 1 - \Delta(L_2,p)  \right) \\ 
		& =  \sum_{p \in T} 1 \ - \ \sum_{p \in T} \Delta(L_2,p)  \\ 
		\label{RH-proof-sum-1}
		& = \#T \ - \ \deg(f) \sum_{s \in S} \Delta(L_1, s).
\end{align}
Then, combine (\ref{numelemsT-2}) and (\ref{RH-proof-sum-1}) and get
\begin{equation}
	\label{RH-proof-sum-2}
	2g_X - 2 + \sum_{p \in X} \left( 1 -  \Delta(L_2,p) \right)
	= \deg(f) \left(2g_Y - 2 + \sum_{s \in Y} \left(1 - \Delta(L_1,s) \right) \right)
\end{equation}
which is the same as (\ref{riemann-hurwitz-type-formula-thm}).
\end{proof}

\begin{corollary}
\label{Cor31}
Let $X=Y=\mathbb{P}^1$ and suppose that
$
L_B 
\hspace{1mm} {\xrightarrow{f:\mathbb{P}^1 \rightarrow \mathbb{P}^1 }}_C  
\hspace{1mm} {\xrightarrow{r}}_E \hspace{1mm} 
L_{inp}
$
where $L_B \in \mathbb{C}(x)[\partial]$ is a GHDO with exponent differences $[\alpha_0, \alpha_1, \alpha_{\infty}]$ at $\{0,1,\infty\}$. Since an exp-product transformation does not affect exponent differences, Theorem \ref{rh_for_des_thm} gives the following equation for  ${\rm Covol}(L_{inp},\mathbb{P}^1)$:
\begin{equation}
	\label{riemann-hurwitz-type-formula-for-P1}
		-2 + \sum_{p \in \mathbb{P}^1} (1 - \Delta(L_{inp},p)) 
	= \deg(f) \left( -2 +  \sum_{i \in \{0,1,\infty \}} (1 - \alpha_i ) \right).
\end{equation}
\end{corollary}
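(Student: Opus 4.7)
The plan is to deduce the Corollary directly from Theorem \ref{rh_for_des_thm} by specializing to the rational case and by using that the exp-product transformation leaves exponent differences (and hence the covolume) invariant.

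First I would apply Theorem \ref{rh_for_des_thm} with $X=Y=\mathbb{P}^1$, so $g_X=g_Y=0$, and with $L_1=L_B$ and $L_2$ the operator obtained from $L_B$ via the change of variables $f$. The theorem yields
\begin{equation*}
\mathrm{Covol}(L_2,\mathbb{P}^1) = \deg(f)\cdot \mathrm{Covol}(L_B,\mathbb{P}^1).
\end{equation*}
Next I would simplify each side. For $\mathrm{Covol}(L_B,\mathbb{P}^1)$, I would observe that a regular point of an operator $L$ of order 2 has exponents $\{0,1\}$, so $\Delta(L,p)=1$ and the corresponding summand $1-\Delta(L,p)$ vanishes. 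Since the GHDO $L_B$ is regular singular with singularities exactly at $\{0,1,\infty\}$, the sum over $\mathbb{P}^1$ collapses to the three terms $(1-\alpha_0)+(1-\alpha_1)+(1-\alpha_\infty)$, giving the right-hand side of (\ref{riemann-hurwitz-type-formula-for-P1}).

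For the left-hand side I would use the second transformation ${\xrightarrow{r}}_E$ going from $L_2$ to $L_{inp}$. The exp-product transformation $\partial\mapsto \partial-r$ shifts both local exponents at every point by the same amount (namely by the residue of $r$), so it preserves exponent differences at all points. Consequently $\Delta(L_{inp},p)=\Delta(L_2,p)$ for every $p\in\mathbb{P}^1$, hence $\mathrm{Covol}(L_{inp},\mathbb{P}^1)=\mathrm{Covol}(L_2,\mathbb{P}^1)$. Substituting yields (\ref{riemann-hurwitz-type-formula-for-P1}).

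The proof is essentially a bookkeeping step, and I do not expect a real obstacle: the only subtlety is making sure that the unspecified contributions at points where $\Delta\neq 1$ but which are not $\{0,1,\infty\}$ really do cancel, and this is exactly the content of the two observations above (regular points contribute zero to $\mathrm{Covol}$, and exp-products do not move exponent differences). Everything else is direct substitution of $g=0$ and $2g-2=-2$ into Theorem \ref{rh_for_des_thm}.
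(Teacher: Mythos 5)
Your proposal is correct and follows essentially the same route as the paper: the corollary is obtained there by applying Theorem \ref{rh_for_des_thm} with $X=Y=\mathbb{P}^1$ (so $2g-2=-2$), noting that the exp-product step does not change exponent differences, and using that regular points (where $\Delta=1$) contribute nothing so the sum for $L_B$ collapses to the three singularities $\{0,1,\infty\}$. Your spelled-out justifications of these two bookkeeping facts match the paper's intent, so there is nothing to add.
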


\begin{corollary}
Let $L_B$ and $L_{inp}$ be as in Corollary~\ref{Cor31}. 
Both have rational function coefficients.
This time, suppose that $f,r$
in
$
L_B 
\hspace{1mm} {\xrightarrow{f}}_C  
\hspace{1mm} {\xrightarrow{r}}_E \hspace{1mm} 
L_{inp}
$
are algebraic functions. Then $f : X \rightarrow \mathbb{P}^1$ for an algebraic curve $X$
whose function field $\mathbb{C}(X) = \mathbb{C}(x,f)$ is an algebraic extension of both $\mathbb{C}(x)\cong \mathbb{C}(\mathbb{P}^1)$ and $\mathbb{C}(f)  \cong \mathbb{C}(\mathbb{P}^1)$.
Let $a_f$ and $d_f$ denote the degrees of these extensions.
\begin{center}
\begin{tikzpicture}[-latex, auto, node distance=3cm, on grid, state/.style ={ circle }]
	\node[state] (C) {$\mathbb{C}(x,f)$};
	\node[state] (A) [below left=of C] {$\mathbb{C}(x)$};
	\node[state] (B) [below right=of C] {$\mathbb{C}(f)$};
	\path[-] (A) edge [bend right = 0] node[above] {$a_f$} (C);
	\path[-] (B) edge [bend left = 0] node[above] {$d_f$} (C);
\end{tikzpicture}
\end{center}
Applying (\ref{riemann-hurwitz-type-formula-thm}) to both field extensions gives:
\begin{equation}
	\label{rh_formula_af_is_not_1}
	{\rm Covol}(L_{inp}, \mathbb{P}^1) = \frac{d_f}{a_f}\left( -2 +  \sum_{i \in \{0,1,\infty \}} (1 - \alpha_i ) \right).
\end{equation}
\end{corollary}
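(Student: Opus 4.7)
The plan is to apply Theorem \ref{rh_for_des_thm} twice, once to each leg of the diagram, using the curve $X$ with function field $\mathbb{C}(x,f)$ as a common upstairs object. First I would observe that the transformation chain $L_B \xrightarrow{f}_C \xrightarrow{r}_E L_{inp}$ can be ``lifted'' to $X$: the change-of-variables $\xrightarrow{f}_C$ produces an operator $\tilde L \in D_{\mathbb{C}(X)}$, and the exp-product step $\xrightarrow{r}_E$ with $r$ algebraic makes sense inside $D_{\mathbb{C}(X)}$ as well, producing the natural pullback $L_{inp}^X$ of $L_{inp}$ from $\mathbb{P}^1_x$ to $X$. Since exp-product transformations do not change exponent differences (Remark \ref{defandrem}), we have $\mathrm{Covol}(\tilde L, X) = \mathrm{Covol}(L_{inp}^X, X)$.

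Next, I would apply Theorem \ref{rh_for_des_thm} to the degree-$d_f$ morphism $f: X \to \mathbb{P}^1$ (the projection corresponding to the inclusion $\mathbb{C}(f) \hookrightarrow \mathbb{C}(x,f)$), with $L_1 = L_B$ and $L_2 = \tilde L$. This yields
\begin{equation*}
\mathrm{Covol}(\tilde L, X) \;=\; d_f \cdot \mathrm{Covol}(L_B, \mathbb{P}^1) \;=\; d_f \Bigl( -2 + \sum_{i \in \{0,1,\infty\}} (1-\alpha_i) \Bigr).
\end{equation*}
Then I would apply Theorem \ref{rh_for_des_thm} a second time, to the degree-$a_f$ morphism $X \to \mathbb{P}^1$ corresponding to $\mathbb{C}(x) \hookrightarrow \mathbb{C}(x,f)$, with $L_1 = L_{inp}$ and $L_2 = L_{inp}^X$. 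This yields
\begin{equation*}
\mathrm{Covol}(L_{inp}^X, X) \;=\; a_f \cdot \mathrm{Covol}(L_{inp}, \mathbb{P}^1).
\end{equation*}
Equating the two expressions via $\mathrm{Covol}(\tilde L, X) = \mathrm{Covol}(L_{inp}^X, X)$ and dividing by $a_f$ gives formula (\ref{rh_formula_af_is_not_1}).

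The main obstacle is the bookkeeping around the fact that $L_{inp}$ has coefficients in $\mathbb{C}(x)$ rather than $\mathbb{C}(X)$: one has to justify that the pullback $L_{inp}^X$ is indeed an element of $D_{\mathbb{C}(X)}$ to which Theorem \ref{rh_for_des_thm} applies, and that it agrees with $\tilde L$ up to an exp-product step. Once this identification is in place, the two applications of Theorem \ref{rh_for_des_thm} are symmetric and the result follows immediately; there is no further ramification or genus contribution to track, because the genus terms $2g_X - 2$ cancel upon equating the two covol computations on $X$.
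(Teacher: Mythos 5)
Your argument is correct and matches the paper's intent exactly: the corollary is obtained by applying Theorem \ref{rh_for_des_thm} once to the degree-$d_f$ extension $\mathbb{C}(f)\subseteq\mathbb{C}(x,f)$ and once to the degree-$a_f$ extension $\mathbb{C}(x)\subseteq\mathbb{C}(x,f)$, using that the exp-product step leaves exponent differences (hence ${\rm Covol}$) unchanged, and equating the two computations on $X$ so the $2g_X-2$ terms cancel. Your extra bookkeeping about identifying the pullback $L_{inp}^X$ in $D_{\mathbb{C}(X)}$ is exactly the implicit content of the paper's one-line justification.
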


\subsection{Candidate Exponent Differences}
\label{subsection_cand_exp_diffs}

This section explains how to obtain exponent differences for candidate GHDOs.

\vspace{11pt} 
\begin{alg}
General Outline of \texttt{find\_expdiffs}.
\begin{enumerate}

\item[] INPUT: $e_{inp}$, $e_{app}$, and $a_f$ where 
	\begin{enumerate}[label=(\roman*)]
	\item[] $e_{inp} =$ the list of exponent differences of $L_{inp}$ at 
	its true singularities,
	\item[] $e_{rem} =$ the (possibly empty) list of exponent differences 
	of $L_{inp}$ at its removable singularities,
	\item[] $a_f$ = candidate algebraic degree.
	\end{enumerate}

\item[] OUTPUT: A list of all lists 
$e_B = [\alpha_0, \alpha_1, \alpha_{\infty}, d]$ 
of integers or rational numbers where $[\alpha_0, \alpha_1, \alpha_{\infty}]$ 
is a list of candidate exponent differences and $d$ 
is a candidate degree $d_f$ for $f$ such that:

\begin{enumerate}[label=(\roman*)]
	\item For every exponent difference $m$ in $e_{inp}$ there 
	exists
	$e \in \mathbb{Q}$ with $e \cdot a_f \in \{1,\dots,d \}$ such that $m = e \cdot \alpha_{i}$ for some $i \in \{0,1,\infty \}$.
	 \item The multiplicities $e$ are consistent 
	with (\ref{riemann-hurwitz-formula}), and 
	their sums are compatible with $d$, see the last paragraph in Step $2$.
\end{enumerate}	
\item
Let $\overline{\alpha}_1, \overline{\alpha}_2, \overline{\alpha}_3$ 
$=$  $\alpha_0, \alpha_1, \alpha_{\infty}$. 
After reordering we may assume 
that $\overline{\alpha}_1$, $\dots$, $\overline{\alpha}_k$ $\in$ $\mathbb{Z}$ 
and $\overline{\alpha}_{k+1},\dots,\overline{\alpha}_3 \notin \mathbb{Z}$ 
for $k \in \{0,1,2,3\}$. 
For each $k \in \{0,1,2,3\}$ 
we use \texttt{CoverLogs} in \cite{imamoglu1} to compute candidates for 
$\overline{\alpha}_1,\dots,\overline{\alpha}_k \in \mathbb{Z}$.

Algorithm \texttt{CoverLogs} computes candidates that meet these requirements: 
\begin{itemize}
\item Logarithmic singularities are true singularities with integer exponent differences. If $L_{inp}$ has at least one logarithmic singularity $s$ with exponent difference $\Delta(L_{inp},s)$, then a candidate $L_B$ must have at least one logarithmic singularity; at least one of the $\overline{\alpha}_1, \overline{\alpha}_2, \overline{\alpha}_3$ must be an integer that divides $a_f \cdot \Delta(L_{inp},s)$,
and for every $\overline{\alpha}_i \in \mathbb{Z}$ there must be at least one $s$ such that $\overline{\alpha}_i$ divides $a_f \cdot \Delta(L_{inp},s)$.
\item $\Delta(L_{inp},s) = 0$ for some $s$ \ \ $\Longleftrightarrow$ \ \  $0 \in \{\overline{\alpha}_1, \overline{\alpha}_2, \overline{\alpha}_3\}$. 
\item Theorem \ref{theorem_expdiffs_and_multiplicity}. 
\end{itemize} 

If $\overline{\alpha}_1+\dots+\overline{\alpha}_k \neq 0$, then 
algorithm \texttt{CoverLogs} also computes the exact degree $d_f$ of $f$ using Theorem \ref{theorem_expdiffs_and_multiplicity} which shows that 
$d_f (\overline{\alpha}_1+\dots+\overline{\alpha}_k) / a_f$ must be the sum of 
the logarithmic exponent differences of $L_{inp}$. 
Otherwise, it uses (\ref{general_degree_bound}) to compute a bound for $d_f$, and uses it as  $d_f$ to compute a candidate degree.

\item
We will explain only the case $a_f=1$, and only $k=1$, 
which is the case 
$[\overline{\alpha}_1, \overline{\alpha}_2, \overline{\alpha}_3] 
= [\alpha_0, \alpha_1, \alpha_{\infty}]$,  
where $\alpha_0 \in \mathbb{Z}$ 
and $\alpha_1, \alpha_{\infty} \notin \mathbb{Z}$. 

\item[]
Let $k=1$. Let $ \alpha_0 \in \mathbb{Z}$ be one of the candidates from algorithm \texttt{CoverLogs}. 
We need to find candidates for $\alpha_1$ and $\alpha_{\infty}$. 

\item[]
The logarithmic singularities of $L_{inp}$ come from the point $0$. 
Non-integer exponent differences of $L_{inp}$ 
must be multiples of 
$\alpha_1$
or 
$\alpha_{\infty}$.  
Let $S_{N}$ be the set of non-logarithmic exponent 
differences of $L_{inp}$ and $S_{R}$ be the set of 
exponent differences of $L_{inp}$ at its removable singularities.  
Consider the set
\begin{equation*}
\Gamma_{1} = 
\begin{cases}
	\Gamma_{A}= \{ \frac{\max{(S_{N})}}{b}   :  
	b=1,\dots,d_f \} & \text{if $S_{N} \neq \emptyset$}, \\
	\Gamma_{B}= \{ \frac{a}{b}  :  a \in S_{R} \cup \{ 1 \}  ,   b=1,\dots,d_f  \} 
	& \text{otherwise}.
\end{cases}
\end{equation*}
$\alpha_1$ (or $\alpha_{\infty}$, but if so, we may interchange them)
must be one of the elements of $\Gamma_1$. 
We loop over all elements of $\Gamma_1$.
Assume that a candidate for $\alpha_1$ is chosen. 
Let $\Omega = S_N \setminus \alpha_1 \mathbb{Z}$. 
Now consider the set
\begin{equation*}
\Gamma_{\infty} = 
	\begin{cases}
		\Gamma_{A} \cup \Gamma_{B} & \text{if $\Omega = \emptyset$}, \\
		\{ \frac{g}{b} \, : \, g = \gcd{(\Omega)} : b=1,\dots,d_f  \} & \text{otherwise}.
	\end{cases}
\end{equation*}

\item[]
Now take all pairs $(\alpha_{\infty}, d)$ satisfying 
(\ref{rh_formula_af_is_not_1}), 
$\alpha_{\infty} \in \Gamma_{\infty}$, $1 \leq d \leq d_f$, 
with additional restrictions on $d$, as follows: 

\item[]
For every potential non-zero value $v$ for one of the $\alpha_i$'s 
we pre-compute a list of integers $N_v$ by dividing 
all exponent differences of $L_{inp}$ by $v$
and then selecting the quotients that are integers. 
Next, let $D_v$ be the set of all $1 \leq d \leq d_f$ 
that can be written as the sum of a sublist of $N_v$. 
Each time a non-zero value $v$ is taken for one of the $\alpha_i$, 
it imposes the restriction $d \in D_v$. 
This means that we need not run a loop for $\alpha_{\infty} \in \Gamma_{\infty}$,
instead, we run a (generally much shorter) loop for $d$ 
(taking values in the intersection of the $D_v$'s so far) 
and then for each such $d$ compute $\alpha_{\infty}$ from (\ref{rh_formula_af_is_not_1}).
We also check if $d \in D_{\alpha_{\infty}}$.

\item 
Return the list of candidate exponent differences with a candidate degree, 
the list of lists  $[\alpha_0, \alpha_1, \alpha_{\infty}, d]$, 
for candidate GHDOs.
\end{enumerate}
\end{alg}

\subsection{Quotient Method}
\label{subsection_quotient_method}
In this section, we explain a method to recover the pullback function $f$. We will explain our algorithm for rational pullback functions. For algebraic pullback functions, the only difference is the lifting algorithm, which is explained in Section \ref{subsection_recovering_f_and_r}. Note that we can always compute the formal solutions of a given differential equation $L_{inp}(y)=0$ up to a finite precision.

\subsubsection{Non-logarithmic Case}
\label{subsection_nonlog_quotient_method}

Let the second order differential equation $L_{inp}(y)=0$ be given. Let $L_{B}$ be a GHDO such that 
$
L_{B} \hspace{1mm} {
\xrightarrow{f}}_C \hspace{1mm}  
{\xrightarrow{r}}_E \hspace{1mm} 
L_{inp}.
$
Let $f : \mathbb{P}^1_x \mapsto \mathbb{P}^1_z$ and $L_1 \hspace{1mm} {\xrightarrow{f}}_C \hspace{1mm}L_2$. If $x=p$ is a singularity of $L_2$ and $z=s$ is a singularity of $L_1$, then we say that $p$ comes from $s$ when $f(p)=s$.

After a change of variables we can assume that $x=0$ is a singularity of $L_{inp}$ that comes from the singularity $z=0$ of $L_B$. This means $f(0)=0$ and we can write $f=c_0x^{v_0(f)}\left( 1 + \dots  \right)$ where $c_0 \in \mathbb{C}$, $v_0(f)$ is the multiplicity of $0$, and the dots refer to an element in $x \mathbb{C}[[x]]$.

Let $y_1$ and $y_2$ be the formal solutions of $L_B$ at $x=0$. The following diagram shows the effects of the change of variables and exp-product transformations on the formal solutions of $L_B$,
\begin{align*}
	y_i(x) 
	& \hspace{2mm} {\xrightarrow{f}}_C \hspace{2mm} y_i(f) \hspace{2mm} {\xrightarrow{r}}_E \hspace{2mm}  Y_i(x)=\exp{ (\int{r dx}) }y_i(f), \hspace{5mm}{i \in \{1,2\} }
\end{align*}
where $Y_1$ and $Y_2$ are solutions of $L_{inp}$. 

Let $q = \frac{y_1}{y_2}$ be a quotient of formal solutions of $L_B$. The change of variables transformation sends $x$ to $f$, and so $q$ to $q(f)$. Therefore, $q(f)$ will be a quotient of formal solutions of $L_{inp}$. 

The effect of exp-product transformation disappears under taking quotients. In general, a quotient of formal solutions of $L_B$ at a point $x=p$ is only unique up to M\"obius transformations $\frac{y_1}{y_2} \mapsto \frac{\alpha y_1+\beta y_2}{ \gamma y_1+ \eta y_2}$.

If $x=p$ has a non-integer exponent difference, then we can choose $q$ uniquely up to a constant factor $c$. So if we likewise compute a quotient $Q$ of formal solutions of $L_{inp}$, then we have $q(f)=c \cdot Q(x)$ for some unknown constant $c$. Then
\begin{equation}
	\label{formula1}
	f(x)= q^{-1}\left(c \cdot Q(x) \right).
\end{equation}

If we know the value of this constant $c$, then (\ref{formula1}) allows us to compute an expansion for the pullback function $f$ from expansions of $q$ and $Q$. To obtain $c$ with a finite computation, we take a prime number $\ell$. Then, for each $c \in \{1, \dots , \ell-1 \}$ we try to compute $f$ modulo $\ell$ in $\mathbb{F}_{\ell}(x)$ using series-to-rational function reconstruction. If this succeeds, then we lift $f$ modulo a power of $\ell$, and try to find $f \in \mathbb{Q}(x)$ with rational number reconstruction. Details of lifting are in Section \ref{subsection_recovering_f_and_r}.

\begin{remark}
\label{ansQ1} 
We compute formal solutions up to a precision $a \geq (a_f+1)(d_f+1)+6$. This suffices to recover the correct pullback function with a few extra terms to reduce the number of false positives.  
\end{remark}

\begin{alg}\label{case1}
General Outline of \texttt{case\_1: non\_logarithmic case}.
\begin{enumerate}
	
\item[] INPUT: $L_{inp}$, $L_B$, $d_f$, $a_f$, where
	\begin{enumerate}[label=(\roman*)]
	\item[] $L_{inp} =$ input differential operator,		
	\item[] $L_B =$ candidate GHDO,				
	\item[] $d_f =$ candidate degree for $f$,			
	\item[] $a_f =$ candidate algebraic degree for $f$.	
	\end{enumerate}
	
\item[] OUTPUT: $[f,r]$ or $0$, where
	\begin{enumerate}[label=(\roman*)]
	\item[] $f =$ pullback function,
	\item[] $r =$ parameter of exp-product transformation.
	\end{enumerate}
\item 
Compute formal solutions $y_1$, $y_2$ of $L_B$ 
and $Y_1$, $Y_2$ of $L_{inp}$ up to precision $a \geq (a_f+1)(d_f+1)+6$.

\item
Compute $q = \frac{y_2}{y_1}$, $Q = \frac{Y_2}{Y_1}$, and $q^{-1}$.

\item Select a prime $\ell$ for which these expansions can be reduced mod $\ell$.

\item
For each $c_0$ in $\{ 1,\dots, \ell-1 \}$: 
\begin{enumerate}
\item Evaluate $\overline{f}_{1,c_0} = q^{-1}(c_0 \cdot Q) \in \mathbb{Z}[x] / (\ell,x^{a})$.
\item If $a_f=1$ then try rational function reconstruction for $\overline{f}_{1,c_0}$   (the case $a_f > 1$ is explained in Section \ref{subsection_Lifting for  an Algebraic Pullback Function}).
\item[$\bullet$] If rational function reconstruction succeeds and produces $f_{1,c_0}$,
then store $c_0$ and $f_{1,c_0}$. 
\item[$\bullet$] If rational function reconstruction fails for every $c_0$, 
then return $0$.
\end{enumerate}

\item For $n$ from $2$ (see Remark \ref{address2} in Section~\ref{subsection_Lifting for a Rational Pullback Function}):
\begin{enumerate}
	\item[] For each stored $c_0$:
	
	\item
	Using the techniques explained in Section \ref{subsection_recovering_f_and_r} 
	lift
	$f_{n-1,c_0}$ to $f_{n,c_0}$.
	
	\item $f_{n,c_0}$ is a candidate for $f$ mod $\ell^n$. Try to obtain $f$ from this with rational number reconstruction.
	%
	%
	If this succeeds, compute $M$ such that $L_B \hspace{1mm} {\xrightarrow{f}}_C \hspace{1mm}M$. 
	Compute $r$ such that $M\hspace{1mm} {\xrightarrow{r}}_E \hspace{1mm} L_{inp}$, if it exists (see Section \ref{r}). If so, return $f$ and $r$. 
\end{enumerate}

\end{enumerate}
\end{alg}

\subsubsection{Logarithmic Case}
\label{subsection_logcase_quotient_method}

A logarithm may occur in one of the formal solutions of $L_{inp}$ at $x=p$ if exponents at $x=p$ differ by an integer. We may assume that $L_{inp}$ has a logarithmic solution at the singularity $x=0$.

Let $y_1$, $y_2$ be the formal solutions of $L_{B}$ at $x=0$. Let $y_1$ be the non-logarithmic solution (it is unique up to a multiplicative constant). Then $\frac{y_2}{y_1}=c_1 \cdot \log(x) + h$ for some $c_1 \in \mathbb{C}$ and $h \in \mathbb{C}[[x]]$. We can choose $y_2$ such that
\begin{equation}
	\label{c1}
	c_1=1 \,\,\, \text{ and } \,\,\,\,\, \text{constant term of $h$}=0.
\end{equation}
That makes $\frac{y_2}{y_1}$ unique. If $h$ does not contain negative powers of $x$ then define
\begin{equation}
	\label{g}
	g = \exp{\left(\frac{y_2}{y_1}\right)}=x\cdot \left( 1+\dots \right)
\end{equation}
where the dots refer to an element of $x\mathbb{C}[[x]]$.

\begin{remark}
\label{differently}
If we choose $y_2$ differently, then we obtain another $\tilde{g}= \exp{ \left( \frac{y_2}{y_1} \right) } $ that relates to $g$ in (\ref{g}) by $\tilde{g}=c_1g^{c_2}$ for some constants $c_1$, $c_2$. If $h$ contains negative powers of $x$, then the formula for $g$ is slightly different (we did not implement this case, instead we use Section \ref{section_integral_basis} to transform equations.).
\end{remark}

We do likewise for the formal solutions $Y_1$, $Y_2$ of $L_{inp}$ and denote 
\begin{equation}
	\label{G}
	G=\exp{\left(\frac{Y_2}{Y_1}\right)}=x\cdot \left( 1+\dots \right).
\end{equation}

Write $f \in \mathbb{C}(x)$ as $c_0 x^{v_0(f)} \cdot \left( 1+\dots \right)$. Then $g(f)=c\cdot x^{v_0(f)} \left( 1+\dots \right)$. Note that $g$, $G$ are not intrinsically unique, the choices we made in (\ref{c1}) implies that 
\begin{equation}
	\label{geq1}
	g(f)=c_1 \cdot G^{c_2}
\end{equation} 
for some constants $c_1$, $c_2$. Here $c_1 = c$ and $c_2 = v_0(f)$.

If $\Delta(L_{inp},0) \neq 0$, then find $v_0(f)$ from $\Delta(L_{B},0) v_0(f) = \Delta(L_{inp},0)$. Otherwise we loop  over $v_0(f)=1,2, \dots, d_f$. That leaves one unknown constant $c$. We address this problem as before, choose a good prime number $\ell$, try $c=1,2, \dots, \ell-1$. Then calculate an expansion for $f$ with the formula
\begin{equation}
	\label{geq2}
	f = g^{-1} \left( c \cdot G^{v_0(f)} \right).
\end{equation}

\begin{alg}
General Outline of \texttt{case\_2: logarithmic case}.
\begin{enumerate}

\item[] INPUT: $L_{inp}$, $L_B$, $d_f$, $a_f$,
where
	\begin{enumerate}[label=(\roman*)]
	\item[] $L_{inp} =$ input differential operator,
	\item[] $L_B =$ candidate GHDO,
	\item[] $d_f =$ candidate degree for $f$,
	\item[] $a_f =$ candidate algebraic degree for $f$.
	\end{enumerate}
\item[] OUTPUT: $[f,r]$ or $0$,
where
	\begin{enumerate}[label=(\roman*)]
	\item[] $f =$ pullback function,
	\item[] $r =$ parameter of exp-product transformation.
	\end{enumerate}
\item
Compute the exponents of $L_{inp}$ and $L_B$. 
If $\Delta(L_{inp},0)=0$, then replace $L_{inp}$ with $L$ 
defined in Remark (\ref{differently}) above. Otherwise let $L=L_{inp}$.

\item 
Compute formal solutions $y_1$, $y_2$ of $L_B$ 
and $Y_1$, $Y_2$ of $L$ up to precision $a \geq (a_f+1)(d_f+1)+6$.

\item 
Compute $q = \frac{y_2}{y_1}$, $Q = \frac{Y_2}{Y_1}$.
Compute $g$, $G$ from (\ref{geq1}) and (\ref{geq2}) respectively, and $g^{-1}$.

\item
Same as in Algorithm \ref{case1} Step 3.

\item
Compute $v_0(f)$ and search for $c_0$ value(s) such that  
$c$ could be $\equiv c_0$ mod $\ell$
by looping over $c_0=1,\dots,\ell-1$. 
If $\Delta(L_{inp},0)=0$, then also simultaneously 
loop over $v_0(f) = 1,\dots,d_f$ to find $v_0(f)$.

\item[] For each $c_0$ in $\{ 1,\dots,\ell-1 \}$:

\begin{enumerate}
\item Evaluate $\overline{f}_{1,c_0} = g^{-1} \left( c_0 \cdot G^{v_0(f)} \right) 
\in\mathbb{Z}[x] / (\ell,x^{a})$.

\item
Try rational function or algebraic function reconstruction for $\overline{f}_{1,c_0}$ as in Algorithm \ref{case1} Step 4.2.
%
%
\end{enumerate}

\item Same as in Algorithm \ref{case1} Step 5. 
	


\end{enumerate}
\end{alg}

\subsection{Lifting: Recovering the Pullback Function}
\label{subsection_recovering_f_and_r}

We explain lifting by using the formula (\ref{formula1}) for the pullback function, which occurs in the non-logarithmic case. The algorithm for the formula  (\ref{geq2}) in the logarithmic case is similar.

\subsubsection{Lifting for a Rational Pullback Function}
\label{subsection_Lifting for a Rational Pullback Function}
By using the formula (\ref{formula1}), which is $f(x)= q^{-1}\left(c \cdot Q(x) \right)$, we can recover the rational pullback function $f$, if we know the value of the constant $c$. We do not have a direct formula for $c$. However, if we know $c_0$ such that 
\begin{equation*}
	c \equiv c_0 \mod \ell
\end{equation*}
for a good prime number $\ell$, then we can recover the pullback function $f$. This can be done via \emph{Hensel lifting techniques}. 

Let $\ell$ be a good prime number and consider
\begin{align*}
	& h : \mathbb{Q} \longrightarrow \mathbb{Q}[x]/(x^{a}) \\
	& h(c) \equiv q^{-1}\left(c \cdot Q(x) \right) \mod x^{a}.
\end{align*}
By looping on $c_0 = 1, \dots, \ell-1$ and trying rational function reconstruction for $h(c_0)$ mod $(\ell,x^a)$, we can compute the image of $f \in \mathbb{F}_{\ell}(x)$ from its image in $\mathbb{F}_{\ell}[x] / ( x^a )$. If $a$ is high enough, then for correct value(s) of $c_0$, rational function reconstruction will succeed and return a rational function $\frac{A_0}{B_0}$ mod $\ell$. This $c_0$ is the one satisfying $c \equiv c_0$ mod $\ell$.

Write 
\begin{equation*}
c \equiv c_0 + \ell c_1 \mod \ell^2 
\end{equation*} 
for $0 \leq c_1 \leq \ell-1$. Taylor series expansion of $h$ gives us
\begin{equation}
	\label{equationh}
	h(c) = h(c_0 + \ell  c_1) \equiv h(c_0) + \ell   c_1  h'(c_0) \mod (\ell^2, x^a).
\end{equation}	
Substitute $c_1=0$, $c_1=1$, respectively, in (\ref{equationh}) and compute 
\begin{align}
	\label{a} & h(c_0) \mod (\ell^2, x^a), \\
	\label{b} & h(c_0+\ell) \equiv h(c_0)+\ell h'(c_0) \mod (\ell^2, x^a).
\end{align} 
Subtracting (\ref{a}) from (\ref{b}) gives
\begin{equation*}
	\ell h'(c_0) \equiv [ h(c_0+\ell) - h(c_0) ] \mod(\ell^2, x^a).
\end{equation*}
Let 
\begin{equation}
	\label{setS}
	E_{c_1} =  h(c_0) + c_1 \ell h'(c_0)
\end{equation} 
where $c_1$ is an unknown constant.
Suppose $f = \frac{A}{B}$ in characteristic $0$. We do not know what $A$ and $B$ are. However, from applying rational function reconstruction for $h(c_0)$, we obtain $A_0,B_0$ with $f \equiv \frac{A_0}{B_0}$ mod $(\ell ,x^a)$. It follows that
\begin{equation*}
f = \frac{A}{B} \equiv \frac{A_0}{B_0} \equiv E_{{c_1}}\mod(\ell, x^a).
\end{equation*} 
From this equation we have 
\begin{equation}
	\label{c}
	A \equiv B E_{{c_1}}\mod (\ell, x^a).
\end{equation}
Now let 
\begin{equation}
	\label{modp2}
	f = \frac{A}{B} \equiv \frac{A_0+\ell A_1}{B_0+ \ell B_1} \mod (\ell ^2,x^a)
\end{equation} 
where 
\begin{align*}
	A_1 & = a_0 + a_1x + \dots + a_{\deg(A_0)}x^{\deg(A_0)} \\
	B_1 & = b_1x + \dots + b_{\deg(B_0)}x^{\deg(B_0)} 
\end{align*}
are unknown polynomials. Here we are fixing the constant term of $B$. We need values of $\{a_i, b_j\}$ to find $f$ mod $(\ell^2,x^a)$. From (\ref{c}), we have 
\begin{equation}
	\label{solvethis}
	(A_0 + \ell  A_1) \equiv (B_0 + \ell  B_1) \cdot E_{c_1}  
	\mod (\ell^2, x^a).
\end{equation}
Now, solve the linear system (\ref{solvethis}) for unknowns $\{a_i, b_j, c_1\}$ in $\mathbb{F}_\ell$. From (\ref{modp2}) find $f$ mod $(\ell^2,x^a)$ and $c \equiv c_0 +\ell c_1$ mod $\ell ^2$. 

Try rational number reconstruction after each Hensel lift. If it succeeds, then check if this rational function is the one that we are looking for as in the last step of Algorithm \ref{case1}. If it is not, then lift $f$ mod $(\ell ^2,x^a)$ to mod $(\ell ^3,x^a)$ (or  $(\ell ^4,x^a)$ if an implementation for solving linear equations mod $\ell^n$ is available). After a (finite) number of steps, we can recover the rational pullback function $f$.

\begin{remark}\label{address2}
	Our implementation gives up when the prime power becomes ``too high''; (a proven bound is still lacking, but would be needed for a rigorous algorithm).
\end{remark}

\subsubsection{Lifting for  an Algebraic Pullback Function}
\label{subsection_Lifting for  an Algebraic Pullback Function}

We can recover algebraic pullback functions in a similar way. However, we need to know $a_f = [\mathbb{C}(x,f):\mathbb{C}(x)]$. The idea is to recover the minimal polynomial of $f$. 

Let $d_f = [\mathbb{C}(x,f):\mathbb{C}(f)]$.  Consider the polynomial in $y$
\begin{equation}
	\label{minpoly}
	\sum_{j=0}^{a_f}A_j  y^j \mod (\ell,x^a)
\end{equation}
with unknown polynomials
\begin{equation*}
	A_{j} = \sum_{i=0}^{d_f} a_{i,j} x^i
\end{equation*}
where $j=0,\dots,a_f$.

First we need to find the value of $c_0$ such that $c_0$ $\equiv$ $c$ mod $\ell$. As before, by looping on $c_0=1, \dots, \ell -1$, we can compute the corresponding $f_{c_0}$ which is a candidate for $f$ mod $(x^a, \ell)$ in $\mathbb{F}_{\ell}[x] / (x^a)$.
The polynomial (\ref{minpoly}) should be congruent to $0$ mod $(\ell,x^a)$ if we plug in $f_{c_0}$ for $y$. Solve the system
\begin{equation*}
	\sum_{j=0}^{a_f}A_j  f_{c_0} ^j \equiv 0 \mod (\ell ,x^a)
\end{equation*}
over $\mathbb{F}_\ell$ and find the unknown polynomials ${A_j}$ mod $\ell$. Then let 
\begin{equation*}
c \equiv c_0 + \ell c_1 \mod \ell^2. 
\end{equation*}
Now let $E_{c_0}$ be as in (\ref{setS}) and consider the system
\begin{equation*}
	\sum_{j=0}^{a_f}(A_j+\ell \tilde{A_j})  {E^j_{c_0}} \equiv 0 \mod (\ell^2,x^a).
\end{equation*}
Solve it over $\mathbb{F}_{\ell}$ to find $c_1$ and the unknown polynomials $\tilde{A_j}$. After a finite number of lifting steps and rational reconstruction, we will have the minimal polynomial of $\sum A_j y^j$ of $f$ in $\mathbb{Q}[x,y]$.

\subsection{Recovering the Parameter of Exp-product}\label{r}

After finding $f$, we can compute the differential operator $M$, such that 
\begin{equation*}
L_{B} 
\hspace{1mm} {\xrightarrow{f}}_C  \hspace{1mm} 
M 
\hspace{1mm} {\xrightarrow{r}}_E \hspace{1mm}
L_{inp}.
\end{equation*}
Then we can compare the second highest terms of $M$ and $L_{inp}$ to find the parameter $r$ of the exp-product transformation: If $M = \partial^2 + B_1 \partial + B_0$ and $L_{inp} = \partial^2 + A_1 \partial + A_0$, then 
\begin{equation*}
r =  \frac{B_1 - A_1}{2}.
\end{equation*}

\section{Computing an Integral Basis for a Linear Differential Operator}
\label{section_integral_basis}

We tested Algorithm \ref{find2f1} on many examples, including from the Online Encyclopedia of Integer Sequences, (\url{https://oeis.org}). Another source of examples comes from \cite{melou_mishna, bostan_kauers, bostan_chyzak_kauers_pech_vhoeij}
(see \cite{imamoglu3} for these operators).
Four of them have solutions in the form of (\ref{form_of_solutions}) and Algorithm \ref{find2f1} finds these solutions. However, Algorithm \ref{find2f1} does not solve the other operators from that list. We know from \cite{melou_mishna} and \cite{bostan_kauers} that these operators do have solutions in the form of (\ref{form_of_solutions_general_case}). It means that these operators must be gauge equivalent to operators with solutions in the form of (\ref{form_of_solutions}). The question is \emph{how can we find these gauge transformations}?
As mentioned in the introduction the key idea is to follow \cite{cohen} \texttt{POLRED}'s strategy; compute an integral basis and then normalize it at infinity. Then we can select an element with minimal valuations at infinity. This element gives us a gauge transformation. 

We modified the algorithm explained in \cite{kauers_koutschan}, which is an analogue of the algorithm in \cite{vhoeij_1994}, and implemented our own version of the integral basis procedure for second order regular singular linear differential operators. Our integral basis algorithm first finds local integral bases for each finite singularity of $L_{inp}$, then combines all of these local bases, and at the end normalizes the basis at infinity in the sense of \cite[Section 2.3]{trager} (see Section \ref{normalization}).

\subsection{Integral Bases}

\begin{definition}
	The \emph{local parameter} $t_p$ of a point $p \in \mathbb{C} \cup \{ \infty \}$ is defined as $t_p=x-p$ if $p \neq \infty$ and $t_p = \frac{1}{x}$ otherwise. If $L \in \mathbb{C}(x)[\partial]$ is regular singular, then all of the solutions of $L$ at a point $x=p$ are in the form
	\begin{equation*}
	f = t_p^{\nu_p}\sum_{i=0}^{\infty}P_i{t}_{p}^{i}
	\end{equation*} 
	where $\nu_p \in \mathbb{C}$ and $P_i \in \mathbb{C}[\log{(t_p)}]$ with $\deg{(P_i)} < {\rm ord}(L)$.
\end{definition}

\begin{definition} Let $y$ be a solution of $L$ at $x=0$,
	\begin{equation*}
	y = x^{\nu_0}\sum_{i=0}^{\infty}P_ix^{i}
	\end{equation*} where 
	$P_i \in \mathbb{C}[\log{(x)}]$.
	The \emph{valuation} of $y$ at $x=0$ is defined as follows:
	\begin{equation}
	\label{val_definition}
	v_0(y) := \nu_0 + \inf\{ i \, | \, P_i \neq 0 \}.
	\end{equation} 
	We say that $y$ is \emph{integral} at $x=0$ if 
	${\rm Re}( v_0(y) ) \geq0$.
\end{definition}

\begin{definition}
	If $p \in \mathbb{C} \cup \{ \infty \}$ and  $y$ is a solution of $L$ at $x=p$, then the valuation of $y$ at $x=p$ is defined as in (\ref{val_definition}) with $x$ replaced by $t_p$.
\end{definition}

\begin{definition}
	Let $L \in \mathbb{C}(x)[\partial]$ and $G \in \mathbb{C}(x)[\partial]$. The operator $G$ is called \emph{integral} for $L$ if 
	\begin{equation}
	\label{change}
	\forall_{p \in \mathbb{C}}{\rm Re}(v_p(G(y))) \geq 0.
	\end{equation}
	We may assume that ${\rm ord}(G) < {\rm ord}(L)$ because $G = QL + R$ for some $Q,R \in \mathbb{C}(x)[\partial]$ such that ${\rm ord}(R) < {\rm ord}(L)$, and we may replace $G$ by $R$ without changing $G(y)$ in (\ref{change}), i.e., we may interpret $G$ as an element of $\mathbb{C}(x)[\partial] / \mathbb{C}(x)[\partial]L$.
\end{definition}
\begin{definition}
	Let $L \in \mathbb{C}(x)[\partial]$ and
	\begin{equation*}
	M_L 
	=  
	\{ G \in \mathbb{C}(x)[\partial] \, | \, \text{$G$ is integral for $L$ } 
	\text{and } {\rm ord}(G) < {\rm ord}(L) \}.
	\end{equation*}
	A basis of $M_L$ as $\mathbb{C}[x]$-module is called an \emph{integral basis}.
\end{definition}

\subsection{Normalization at Infinity}
\label{normalization}
Assume that we computed an integral basis $[B_0, B_1]$ for a second order regular singular $L\in \mathbb{Q}(x)[\partial]$. One can normalize $[B_0, B_1]$ at infinity as follows:

Compute the formal solutions $Y_1$, $Y_2$ of $L_{inp}$ at $x=\infty$. Compute the pole orders (the pole order is the valuation as in (\ref{val_definition}) multiplied by $-1$) of 
\begin{equation}
\label{pole_orders}
B_0(Y_1), \hspace{5mm} B_0(Y_2),
\hspace{5mm}
B_1(Y_1), \hspace{5mm} B_1(Y_2)
\end{equation}
at the point $x=\infty$. Let the maximum of the pole orders be $m$ and let it come from $B_i(Y_j)$ where $i\in \{0,1\}$ and $j\in \{1,2\}$. Form the ansatz
\begin{equation*}
\mathfrak{B} = B_i(Y_j) - C \cdot x^{(m-n)} B_k(Y_j).
\end{equation*}
Here $k\in \{0,1\}$, $k \neq i$, and $n$ is the pole order of $B_k(Y_j)$ at $x=\infty$. For a suitable $C$, the pole order of $\mathfrak{B}$ at $x= \infty$ will be less than $m$. Find this $C$. Update $B_i = B_i - C \cdot  x^{(m-n)} B_k$. Now we have the updated basis $[ B_0, B_1 ]$. Compute the pole orders, for this updated basis, as in (\ref{pole_orders}). The possibilities for the updated basis $[ B_0,B_1 ]$ are
\begin{enumerate}
\item one of the pole orders decreases and none of them increases,
\item one of the other pole orders increases.
\end{enumerate}
If 1 occurs, then it means that we made an improvement, we are making the 
pole orders smaller. 
Repeat this process until there is no improvement possible (case 2). 
Then return the normalized basis $[ B_0,B_1 ]$.

\subsection{Finding a Suitable Gauge Transformation}

By using an integral basis $[ B_0,B_1 ]$, which is normalized at infinity, for $L_{inp}$, we want to find a gauge transformation $\mathcal{G}$ such that $\mathcal{G}$ transforms $L_{inp}$ to $\tilde{L}_{inp}$ such that 
$
L_{B} \hspace{1mm} 
{\xrightarrow{f}}_C \hspace{1mm}  
{\xrightarrow{r}}_E \hspace{1mm} 
\tilde{L}_{inp}.
$ 
We observed that for the operators coming from \cite{melou_mishna} and \cite{bostan_kauers}, one of the basis elements always gives such a gauge transformation $\mathcal{G}$. We tested our main algorithm on other examples as well and it turns out that this approach is very effective. 

\begin{alg} General Outline of \texttt{hypergeometricsols}.\label{hypergeometricsols}
	\begin{enumerate}
		\item[] INPUT: $L_{inp} \in \mathbb{Q}(x)[\partial]$ and (optional) $a_fmax$ where
			\begin{enumerate}
				\item[] $L_{inp} =$ a second order regular singular irreducible operator,
				\item[] $a_fmax =$ bound for the algebraic degree $a_f$. If omitted, then $a_fmax=2$ which means our implementation tries $a_f=1$ and $a_f=2$.
			\end{enumerate}	
		
		\item[]
		OUTPUT: Solutions of $L_{inp}$ in the form of (\ref{form_of_solutions_general_case}),  
		or  an empty list.
		
		\item Try to find solutions of $L_{inp}$ in the form of (\ref{form_of_solutions}) by using the Algorithm \ref{find2f1} in Section \ref{section_algorithm}. If none are found go to Step 2. 
		
		\item Compute an integral basis $[B_0, B_1]$ 
		for 
		$L_{inp}$ and normalize this basis at infinity by using the method
		given in Section \ref{normalization}. Each basis element $B_k$, $(k \in \{0,1\})$, 
		is a candidate gauge transformation. Transform $L_{inp}$ to another operator using $B_k$. Try to find solutions of the new operator in the form of (\ref{form_of_solutions}) by using Algorithm \ref{find2f1}. If 
		this new operator has solutions of type (\ref{form_of_solutions}), then apply the inverse of the gauge transformation to these solutions to form the solutions of $L_{inp}$ of type (\ref{form_of_solutions_general_case}), and return them. Otherwise return an empty list.
	\end{enumerate}
\end{alg}
\noindent A Maple implementation of Algorithm \ref{hypergeometricsols} and examples can be found at \cite{imamoglu3}.

\bibliographystyle{elsart-harv}
\bibliography{refs}


\end{document}